\documentclass[11pt,a4paper]{article}
\usepackage{amssymb}
\usepackage{amsmath}
\usepackage{latexsym}
\usepackage{graphicx}
\usepackage{color}
\usepackage{hyperref}
\newcommand{\boldgreek}[1]{\mbox{\boldmath$#1$}}
\newcommand{\nfrac}[2]{{\textstyle \frac{#1}{#2} }}
\newcommand{\R}{I\kern-0.37emR}

\newcommand{\ny}{n\rightarrow\infty}
\newcommand{\Q}{I\kern-0.37emP}
\newcommand{\E}{I\kern-0.37emE}
\newtheorem{THE}{Theorem}[section]

\begin{document}
\title{R-estimation in a Linear Model with Autoregressive Errors}

\author{
Jana Jure\v{c}kov\'a\thanks{The research of Jana Jure\v{c}kov\'a and Jan Picek was supported by Grant 22-03636S of the Czech Science Foundation.} $^{\small 1}$, 
Hira L. Koul $^{\small 2}$, 
Jan Picek\thanks{Corresponding author. Email: jan.picek@tul.cz} $^{\small 3}$\\[2ex]
$^{\small 1}$ Institute of Information Theory and Automation, Czech Academy of Sciences,\\
Pod vod\'arenskou v\v{e}\v{z}\'i 4, 182 00 Prague, Czech Republic\\
Email: jureckova@utia.cas.cz\\[1ex]
$^{\small 2}$ Department of Statistics and Probability, Michigan State University,\\
East Lansing, MI 48824, USA\\
Email: koul@msu.edu\\[1ex]
$^{\small 3}$ Technical University of Liberec, Faculty of Science, Humanities and Education,\\
Studentsk\'a 2, 461 17 Liberec, Czech Republic\\
Email: jan.picek@tul.cz
}

\date{} % Empty date

\maketitle

\begin{abstract} 
In the linear regression model, we construct a nonparametric estimate of the regression parameter vector $\boldgreek\beta$ that is insensitive to a possible nuisance autoregression in the model errors. The main tool for estimating $\boldgreek\beta$ is based on the autoregression rank scores of the model. The resulting estimator is invariant to the autoregression parameters and thus remains insensitive to potential hidden linear trends or other structured disturbances, which frequently occur in economic, hydrological, and related applications.
\end{abstract}

\section{Introduction}
\setcounter{equation}{0}
Classical regression methods, such as Ordinary Least Squares (OLS), often fail in the presence of outliers or autocorrelated error structures. These conditions are common in practice, yet often remain undetected in observational data. As a result, standard estimation procedures can lead to misleading conclusions — for instance, they may mask an upward trend in hydrological or economic applications.

To address this, robust alternatives are needed that remain valid under such departures from ideal assumptions. In particular, the presence of autocorrelation in model errors can distort inference on regression parameters unless explicitly accounted for. In this paper, we propose a novel rank-based estimator that is invariant to both nuisance autoregression parameters and the intercept, while preserving desirable asymptotic properties. This estimator builds upon and extends the ideas of Jaeckel~\cite{Jaeckel1972} and Koul and Saleh~\cite{KoulSaleh1995}, and is particularly suited to models where errors follow an unknown autoregressive structure.

We consider a linear regression model of order $s$, where the model errors follow a stationary autoregressive process of order~$p$:
\begin{eqnarray}
\label{1}
&&y_{t} = \beta_0 + \mathbf{x}_t^{\top}\boldgreek\beta + \varepsilon_t 
= \beta_0 + x_{t1}\beta_1 + \ldots + x_{ts}\beta_s + \varepsilon_t, \\[2mm]
&&\varepsilon_t = \phi_0 + \phi_1 \varepsilon_{t-1} + \ldots + \phi_p \varepsilon_{t-p} + u_t, \quad t = 1, 2, \ldots, n, \nonumber \\[2mm]
&&\boldgreek\beta = (\beta_1, \ldots, \beta_s)^{\top}, \quad \boldgreek\beta^{\ast} = (\beta_0, \beta_1, \ldots, \beta_s)^{\top}. \nonumber
\end{eqnarray}

Here, $y_t$ is the response variable, $\mathbf{x}_t = (x_{t1}, \ldots, x_{ts})^{\top}$ are the regressors, and $\beta_j$, $j = 1, \ldots, s$ are the unknown regression parameters, with $\beta_0$ as the intercept. Moreover, $\phi_0, \phi_1, \ldots, \phi_p$ are the unknown autoregressive parameters, where $\phi_0$ is included for mathematical convenience and may be zero. The parameters $\beta_0$ and $\boldgreek\beta$ are of primary interest, while $\phi_0, \ldots, \phi_p$ are considered nuisance parameters. The innovations $u_t$ are assumed to be independent and identically distributed (\textit{i.i.d.}) with a continuous distribution function $F(u)$, increasing on the set $\{u : 0 < F(u) < 1\}$, and with density $f(u)$, which is generally unknown but satisfies:
\begin{equation}\label{moments_of_u}
\E(u_t) = 0, \quad \mathrm{Var}(u_t) = \sigma^2 < \infty.
\end{equation}

We assume that the error process given in equation~(\ref{1}) is strictly stationary, so that the errors $\varepsilon_t$ share a common marginal distribution and, in particular, the same quantiles.  The observable variables are $y_t$, for $t = 0, \pm1, \pm2, \ldots$. For identifiability, we assume that the initial values $(y_{-p+1}, \ldots, y_0)$ are known.

Hallin and Jure\v{c}kov\'a~\cite{Hallin1999} considered the autoregressive model without a nuisance linear regression and constructed an asymptotically optimal rank test for hypotheses concerning the autoregression parameters $\phi_0, \ldots, \phi_p$. If the autoregression is treated as a nuisance and our inference focuses on the linear regression, then its presence can distort the conclusions drawn from the data. For example, it may lead to a false detection of a linear trend in water flow, or may mask a true trend.

Keeping this in mind, Jure\v{c}kov\'a et al.~\cite{Metrika} proposed a nonparametric test of the hypothesis $\mathbf{H}_0: \boldgreek\beta = \mathbf{0}$ that is invariant to the nuisance autoregression. Their main tool was the concept of \textit{autoregression rank scores}, introduced by Koul and Saleh~\cite{KoulSaleh1995} as the dual counterpart to the autoregression quantile. These rank scores themselves are invariant to the nuisance autoregression, which ensures the invariance of the test statistic's distribution under the null hypothesis.

However, the invariance of the test criterion under the null hypothesis is not sufficient for constructing a point estimator of $\boldgreek\beta$ when other parameters remain unspecified. Therefore, it is necessary to study the behavior of the test criterion in a neighborhood of the null hypothesis. This is addressed in the present paper, where we show that the linear autoregression rank statistic is asymptotically linear in the regression parameter. This property enables us to construct an invariant estimator of $\boldgreek\beta$, which corresponds to a nonparametric rank estimator (R-estimator). To adapt the rank-based approach to the autoregressive setting, the ranks of the residuals $y_t - \mathbf{x}_t^{\top} \mathbf{b}$ are replaced by the autoregression rank scores of these residuals. The resulting estimators of $\boldgreek\beta$ will be referred to as \textit{RR-estimators}.

The problem of estimating the autoregression parameters themselves is not addressed in this paper. It has been studied, for example, in~\cite{Koul Ossiander1994}, \cite{KoulSaleh1995}, \cite{Koul2002}, \cite{Sankhya2005}, among others — typically without considering an additional linear regression. A linear programming estimator of the autoregression parameters, convenient in cases with heavy-tailed errors, was proposed in~\cite{Feigin}. With respect to hypothesis testing, \cite{STPA} proposed a test for serial independence in a linear model against potential autoregressive error structures. However, the problem of point estimation of autoregression parameters in the presence of an additional nuisance regression remains open, despite its clear relevance in applications.

This paper is organized as follows. Section~2 briefly reviews the concepts of autoregression quantiles and autoregression rank scores, along with their key properties (for further details, see~\cite{KoulSaleh1995}, \cite{Hallin1999}, \cite{Guney1}, \cite{Aplimat2020}), and , \cite{Jur_Slopes}. In Section~3, we define the RR-estimator of the regression parameter $\boldgreek\beta$ and illustrate its properties. Analogously to the rank estimators in models with \textit{i.i.d.} errors, the RR-estimator is defined as a minimizer of a specific form of Jaeckel’s rank dispersion measure~\cite{Jaeckel1972}.

Section~4 presents a simulation study that compares the proposed RR-estimator with OLS and median quantile regression under various error distributions, including normal, contaminated normal, and Cauchy. Section~5 provides a numerical illustration using real-world data on real estate prices, demonstrating the robustness and practical applicability of the RR-estimator in comparison to classical and quantile-based methods.

\section{Autoregression rank scores and autoregression quantiles}
\setcounter{equation}{0}

Assume first that the observations $y_t$ follow the autoregressive (AR) model
\begin{equation}
	y_t = \phi_0 + \phi_1 y_{t-1} + \ldots + \phi_p y_{t-p} + u_t, \quad t \in \{0,\, \pm1,\, \pm2,\, \ldots\},
	\label{2.1}
\end{equation}
with \textit{i.i.d.} errors $u_t$. Denote the vectors
\[
\mathbf{y}_t = (y_t, \ldots, y_{t-p+1})^{\top}, \quad \mathbf{y}_t^* = (1,\, \mathbf{y}_t^{\top})^{\top}, \quad t = 0, \ldots, n-1,
\]
and consider the ($n\times p$ and $n\times (p+1)$, respectively) random matrices
\[
\mathbf{Y}_n = \left( \mathbf{y}_0, \ldots, \mathbf{y}_{n-1} \right)^{\top}, \quad \mathbf{Y}_n^* = \left( \mathbf{y}_0^*, \ldots, \mathbf{y}_{n-1}^* \right)^{\top}.
\]

The $\alpha$-\textit{autoregression quantile} is defined as the solution
\[
\widehat{\boldgreek\rho}_n(\alpha) = \left( \widehat{\rho}^0_n(\alpha),\, \widehat{\boldgreek\rho}^{1\top}_n(\alpha) \right)^{\top},
\]
where $\widehat{\rho}^0_n(\alpha) \in \mathbb{R}$ and $\widehat{\boldgreek\rho}^{1}_n(\alpha) \in \mathbb{R}^p$, obtained by minimizing over $\mathbf{r} = (r_0, \mathbf{r}_1^{\top})^{\top} \in \mathbb{R}^{p+1}$ the objective function
\begin{equation}
\sum_{t=1}^{n} h_{\alpha}(y_t - r_0 - \mathbf{y}_{t-1}^{\top}\mathbf{r}_1),
\quad \text{where} \quad
h_{\alpha}(u) = |u|\left( \alpha I[u>0] + (1-\alpha)I[u<0] \right).
\label{2.5}
\end{equation}

Let $\boldgreek\Sigma_n$ and $\boldgreek\Sigma_n^*$ denote the empirical covariance matrices
\begin{equation}
\label{35}
\boldgreek\Sigma_n = \frac{1}{n} \sum_{t=1}^n \mathbf{y}_{t-1}\mathbf{y}_{t-1}^{\top}, \qquad
\boldgreek\Sigma_n^* = \frac{1}{n} \sum_{t=1}^n \mathbf{y}_{t-1}^* (\mathbf{y}_{t-1}^*)^{\top}.
\end{equation}
Under regularity conditions \textbf{(F1–F2)} and \textbf{(X1–X3)} stated below, there exist positive definite matrices $\boldgreek\Sigma$ and $\boldgreek\Sigma^*$ of respective  orders $p\times p$ and $(p+1)\times(p+1),$ such that 
\begin{equation}
\label{35a}
\boldgreek\Sigma_n \stackrel{p}{\longrightarrow} \boldgreek\Sigma,
\qquad
\boldgreek\Sigma_n^* \stackrel{p}{\longrightarrow} \boldgreek\Sigma^*,
\quad \text{as} \quad n \to \infty.
\end{equation}

The definition in (\ref{2.5}) formally coincides with the classical regression quantile of Koenker and Bassett. Analogously, the dual linear programming formulation can be written as:
\begin{equation}
\label{H2.8}
\begin{array}{ll}
\text{maximize} & \mathbf{y}_n^{\top}\mathbf{a} \\
\text{subject to} & \mathbf{1}_n^{\top}\mathbf{a} = n(1-\alpha), \\
& \mathbf{Y}_n^{*\top}(\mathbf{a} - (1-\alpha)\mathbf{1}_n) = \mathbf{0}, \\
& \mathbf{a} \in [0,1]^n, \quad 0 \leq \alpha \leq 1.
\end{array}
\end{equation}

The solution $\widehat{\mathbf{a}}_n(\alpha)$, referred to as the vector of \textit{autoregression rank scores (ARS)}, is autoregression
invariant, hence independent of the  nuisance autoregression parameters$\phi_1, \ldots, \phi_p$. In the special case where $\phi_1 = \ldots = \phi_p = 0$, the components $\widehat{a}_t(\alpha)$ coincide with the classical H\'ajek rank scores~\cite{Hajek1965}:
\[
a^*_t(\alpha) =
\begin{cases}
1, & \text{if} \quad 0 \leq \alpha < \frac{R_{n,t} - 1}{n}, \\
R_{n,t} - n\alpha, & \text{if} \quad \frac{R_{n,t} - 1}{n} \leq \alpha < \frac{R_{n,t}}{n}, \\
0, & \text{if} \quad \frac{R_{n,t}}{n} \leq \alpha \leq 1,
\end{cases}
\]
where $R_{n,t}$ denotes the rank of $y_t$ among $\{y_1, \ldots, y_n\}$.

Since $\widehat{\boldgreek\rho}_{n}(\alpha)$ and $\widehat{\mathbf{a}}_n(\alpha)$ are dual to each other, their relationship satisfies:
\begin{equation}
\label{RR1}
\widehat{a}_{n,t}(\alpha) =
\begin{cases}
1, & \text{if} \quad y_t > \mathbf{y}_t^{\top}\widehat{\boldgreek\rho}_n(\alpha), \\
\text{between } 0 \text{ and } 1, & \text{if} \quad y_t = \mathbf{y}_t^{\top}\widehat{\boldgreek\rho}_n(\alpha), \\
0, & \text{if} \quad y_t < \mathbf{y}_t^{\top}\widehat{\boldgreek\rho}_n(\alpha),
\end{cases}
\quad t=1,\ldots,n.
\end{equation}

To study the behavior of these quantities, we assume that the unknown density $f$ of $u_t$ belongs to the family $\mathcal{F}$ of exponentially tailed distributions satisfying:

\begin{enumerate}
\item[\textbf{(F1)}] $f$ is positive, absolutely continuous, and has finite Fisher information:
\[
\mathcal{I}(f) = \int \left( \frac{f'(x)}{f(x)} \right)^2 f(x) \, dx < \infty.
\]
Additionally, there exists $K_f \geq 0$ such that $f'$ and $f''$ are bounded for all $|x| > K_f$.

\item[\textbf{(F2)}] $f(x) \to 0$ monotonically as $x \to \pm\infty$ and the distribution function $F$ satisfies:
\[
\lim_{x \to -\infty} \frac{-\log F(x)}{c|x|^r} =
\lim_{x \to \infty} \frac{-\log(1-F(x))}{c|x|^r} = 1,
\]
for some $c > 0$ and $r \geq 1$.
\end{enumerate}

Further consequences of (F1) and (F2) are detailed in~\cite{Hallin1999} and~\cite{Guney1}.

\smallskip

In what follows, we denote by $\mathbf{X}_n$ the $n \times s$ regression  matrix formed by stacking the regressors from model~(\ref{1}), that is,
\[
\mathbf{X}_n = \begin{pmatrix}
\mathbf{x}_1^{\top} \\
\mathbf{x}_2^{\top} \\
\vdots \\
\mathbf{x}_n^{\top}
\end{pmatrix}, \quad \text{with } \mathbf{x}_t = (x_{t1}, \ldots, x_{ts})^{\top}.
\]

Moreover, the following standard conditions are imposed on this matrix $\mathbf{X}_n$:
\begin{enumerate}
\item[\textbf{(X1)}] The matrix $\mathbf{A}_n = n^{-1} \mathbf{X}_n^{\top} \mathbf{X}_n$, of order $s \times s$, is positive definite for $n \geq n_0$.

\item[\textbf{(X2)}] $n^{-1} \sum_{t=1}^n \|\mathbf{x}_{t}\|^4 = O(1)$ as $n \to \infty$.

\item[\textbf{(X3)}] $\lim_{n \to \infty} \max_{1 \leq t \leq n} \left\{ n^{-1} \mathbf{x}_{t}^{\top} \mathbf{A}_n^{-1} \mathbf{x}_{t} \right\} = 0$.
\end{enumerate}

Let $\mathbf{D}_n$ and $\mathbf{Q}_n$ denote the following two matrices, of respective orders $(s+1)\times (s+1)$ and $s\times s$:
\begin{eqnarray}\label{2.3a}
\mathbf{D}_n &=& n^{-1} \mathbf{Y}_n^{\top} \mathbf{Y}_n, \\
\mathbf{Q}_n &=& n^{-1} (\mathbf{X}_n - \overline{\mathbf{X}}_n)^{\top} (\mathbf{X}_n - \overline{\mathbf{X}}_n). \nonumber
\end{eqnarray}
We assume that
\begin{equation}
\label{2.3b}
\lim_{n \to \infty} \mathbb{E}(\mathbf{D}_n) = \mathbf{D}, \qquad
\lim_{n \to \infty} \mathbb{E}(\mathbf{Q}_n) = \mathbf{Q},
\end{equation}
where $\mathbf{D}$ and $\mathbf{Q}$ are positive definite matrices of the same respective orders.

\section{RR-estimator and classical R-estimator of $\boldgreek\beta$}
\setcounter{equation}{0}

Our  goal is to estimate the parameter $\boldgreek\beta$ by  a suitable rank estimate. To proceed, 
fix an $\alpha\in(0,1)$, possibly around 1/2, and consider a nondecreasing score generating function 
\begin{equation}
\label{8c}
\varphi_{\alpha}(u)=\alpha-\mathbb I[u<\alpha], \; \ 0<u<1.
\end{equation}
If we are not aware of  autocovariances between the model errors, we  normally use the nonparametric
R-estimator of $\boldgreek\beta$, introduced in  \cite{Jur1971} and \cite{Koul1971};  for its further study we refer to \cite{JurSen1996}, 
\cite{Sankhya2005}, \cite{JurSenPicek}, among others. 
Let $R_{nt}({\mathbf y}-{\mathbf X}{\mathbf b})$  denote  the rank of  residual
$y_t-{\mathbf x}_t^{\top}{\mathbf b}$, among
$\left(y_1-{\mathbf x}_1^{\top}{\mathbf b},\ldots,y_n-{\mathbf x}_n^{\top}{\mathbf b}\right)$, for ${\mathbf b}\in{\R}^{s}$. 
The R-estimator, denoted   $\check{\boldgreek\beta}_{nR}(\alpha)$, is defined %through the Jaeckel criterion introduced in \cite{Jaeckel1972}; 
 as a  minimizer   of  the following Jaeckel \cite{Jaeckel1972} 
criterion ${\cal D}_n$,  over ${\mathbf b}\in{\R}^{s}$:  %with a fixed  $\alpha\in (0,1)$ and  with the score-generating function $\varphi_{\alpha}$: 
\begin{equation}
\label{11a}
{\mathcal D}_n({\mathbf b})=\sum_{t=1}^n(y_t-{\mathbf x}_t^{\top}{\mathbf b})%
\left[\varphi_{\alpha}\left(\frac{R_{nt}({\mathbf y}-{\mathbf X}{\mathbf b})}{n+1}\right)-\overline{\varphi}_{n\alpha}\right].
\end{equation}
Such estimator $\check{\boldgreek\beta}_{nR}(\alpha)$ in fact estimates only the slope components\\
$\boldgreek\beta=(\beta_1,\ldots,\beta_s)^{\top}.$
It is invariant to $\beta_0$ and does not depend on the value of  quantile $F^{-1}(\alpha).$
  The additional  intercept component of the estimator will be  the $[n\alpha]$-quantile $\check{\beta}_{n0}$  of 
$y_1-\mathbf x_1^{\top}\check{\boldgreek\beta}_{nR,\alpha},\ldots, y_n-\mathbf x_n^{\top}\check{\boldgreek\beta}_{nR,\alpha} $. 

However, such estimator is still  affected by the possible nuisance autoregression of model errors. In order to reduce  this effect,  
we get an inspiration from the test of the  hypothesis $\mathbf H_0: \{\boldgreek\beta=\mathbf 0\}$, developed in \cite{Metrika}; its test criterion is based 
on the linear autoregression rank score statistic. 
   We can consider the test of hypothesis $\mathbf H_{\mathbf b}: \{\boldgreek\beta=\mathbf b\}$ analogously,  for any fixed $\mathbf  b \in \R^s$.  
Such test would be  based on the ranks of residuals $\tilde{y}_{t,\mathbf b}=y_t-\mathbf x_t^{\top}\mathbf b,$  which under  validity of the hypothesis
$\mathbf H_{\mathbf b}$ 
would satisfy $\tilde{y}_{t,\mathbf b} =\beta_0+\varepsilon_t,  \;  t=0,\pm 1, \pm 2,\ldots.$  
Because   the  $\varepsilon_t, \; t=0,\pm1 \pm 2,\ldots$ follow the autoregression  model of the assumed order $p$, 
 the residuals $\tilde{y}_{t,\mathbf b} $ will follow, under validity of $\mathbf H_{\mathbf b}$,  the model 
\begin{equation}\label{11b}
\tilde{y}_{t,\mathbf b} = \beta_0 + \phi_0 + \phi_1\tilde{y}_{t-1,\mathbf b} +\ldots + \phi_p \tilde{y}_{t-p,\mathbf b} + u_t, \quad t = 1, ··· , n.
\end{equation}
 To profit from the invariance of the autoregression rank scores, we use them instead of the ranks of residuals. 
Hence we calculate  the $\alpha$-autoregression rank scores 
$$\widehat{\mathbf a}_{n\alpha}(\mathbf b)=(\hat{a}_{1\alpha}(\mathbf b),\ldots,\hat{a}_{n\alpha}(\mathbf b))^{\top}$$
for the residuals/pseudo-observations $\tilde{y}_{t,\mathbf b}=y_t-\mathbf x_t^{\top}\mathbf b$, ongoingly for $\mathbf b\in \R^s,   \quad t=1,\ldots,n.$
 As an estimator of $\boldgreek\beta$ we propose
  $\widetilde{\boldgreek\beta}_{n\alpha}$ of  the following modification of the Jaeckel measure of the rank dispersions:
\begin{equation}\label{11c}
\mathcal D_{n\alpha}(\mathbf b) =\sum_{t=1}^n(y_t-\mathbf x_t^{\top}\mathbf b)\left(\hat{a}_{t\alpha}(\mathbf b)-\bar{a}_{n\alpha}\right) =\min, \;  \; 
\mathbf b\in \R^s.
\end{equation}
Every point $\mathbf b=\widetilde{\boldgreek\beta}_{nR}(\alpha)$ satisfying (\ref{11c})  is considered as 
a possible  estimator of $\boldgreek\beta.$ 
The measure $\mathcal D_n(\mathbf b)$, based on the autoregression rank scores, is invariant to the autoregression parameters. 
Following \cite{Jaeckel1972}, we see that it is a continuous, convex and piecewise linear function of $\mathbf b\in \R^s$. 
As such, it is differentiable in $\mathbf b$ a.e.;
we can write
\begin{equation} \label{3.6}
\frac{\partial}{\partial \mathbf b}\mathcal D_n(\boldgreek\beta+n^{-1/2}\mathbf b)\Big |_{\mathbf b_0}
= n^{-1/2}(\mathbf X_n-\overline{\mathbf X}_n)^{\top}\widehat{\mathbf a}_{n\alpha}(\mathbf y_n+n^{-1/2}\mathbf b_0))
\end{equation}\
at any point of differentiability of $\mathcal D_n$. 

Next we proceed as follows: Consider the residuals $\tilde{\mathbf y}_{n,\mathbf b}=\mathbf y_n-\mathbf X_n\mathbf b, \; \mathbf b\in \R^s$, 
and for them calculate  the $\alpha$-regression quantile and  $\alpha$-regression rank scores.
These two entities are dual to each other, hence they are in the  relation
$$
\hat{a}_{n\alpha,t}(\mathbf y_n-\mathbf X_n\mathbf b )= \left\{     
\begin{array}{lll}
\; 1          &  \ldots  &\tilde{y}_{t,\mathbf b} > \mathbf Y_{t-1}\widehat{\boldgreek\rho}_{n\alpha}(\tilde{\mathbf y}_{n}(\mathbf b))\\[2mm]
\in(0,1)   &  \ldots  &\tilde{y}_{t,\mathbf b} = \mathbf Y_{t-1}\widehat{\boldgreek\rho}_{n\alpha}(\tilde{\mathbf y}_{n}(\mathbf b))\\[2mm]
\; 0         &   \ldots  &\tilde{y}_{t,\mathbf b} < \mathbf Y_{t-1}\widehat{\boldgreek\rho}_{n\alpha}(\tilde{\mathbf y}_{n}(\mathbf b)), \quad t = 1, ..., n.\\ 

\end{array}
\right . 
$$

 We propose the estimator $\widetilde{\boldgreek\beta}_{nRR}(\alpha)\in{\R}^{s}$  of $\boldgreek\beta$, as a minimizer of the modification of Jaeckel
\cite{Jaeckel1972} measure of the rank dispersion, with the $\alpha$-autoregression rank scores inserted;  the modified measure has the form
\begin{eqnarray}\label{Jaeckel} 
&\mathcal{D}_n(\mathbf b)=\min, \; \mathbf b\in \R^s, &\\ [2mm]
\mbox{ where } & \mathcal{D}_n(\mathbf b)=\sum_{t=1}^n(y_t -\mathbf x_t^{\top}\mathbf b)[-\hat{a}_{t,\alpha}(\mathbf y_n-\mathbf X_n\mathbf b)
+\bar{a}_{n\alpha}].&\nonumber  
\end{eqnarray}

Note that  
 $\bar{a}_{n\alpha}=\frac 1n \sum_{t=1}^n \hat{a}_{t\alpha}(\mathbf y_n-\mathbf x_n^{\top}\mathbf b)$ is constant in $\mathbf b$,  
as an average of the AR rank scores.
It implies that the solution of (\ref{Jaeckel}) is invariant to the intercept and to the nuisance autoregression parameters $\phi_1,\ldots,\phi_p$. 

The subgradient of (\ref{Jaeckel}) is the \textit{linear autoregression rank score statistic}
\begin{eqnarray}
\label{37}
\mathbf S_n(\mathbf y_n-\mathbf X_n{\mathbf b})&=&(S_{n,1}(\mathbf y_n-\mathbf X_n\mathbf b),\ldots,
S_{n,s}(\mathbf y_n-\mathbf X_n\mathbf b)^{\top}\nonumber\\[3mm] 
&=&-n^{-1}\left[\mathbf X_n-\bar{\mathbf X}_n\right]\hat{\mathbf a}_{n\alpha}(\mathbf y_n-\mathbf X_n\mathbf b)\\  %\nonumber\\
S_{n,j}(\mathbf y_n-\mathbf X_n\mathbf b)&=&-n^{-1}\sum_{t=1}^n (x_{t,j}-\bar{x}_{n,j})\hat{a}_{n\alpha,t}(\mathbf y_n-\mathbf X_n\mathbf b),  
 \;    j=1,\ldots,s. \nonumber
\end{eqnarray}
Hence, the estimate $\widetilde{\boldgreek\beta}_{n,RR}$ of $\boldgreek\beta$ can be alternatively expressed as a solution of the minimization
\begin{equation}
\label{38}
\|{\mathbf S}_n(\mathbf y-\mathbf X_n\mathbf b)\|:=\min, \quad {\mathbf b}\in{\R}^s.
\end{equation}
Because the building stones of  $\widetilde{\boldgreek\beta}_{n,RR}$ are invariant to the autoregression parameters and to the intercept, 
the resulting estimator is equally invariant. 
We can verify its consistency and other asymptotic properties, following up the approach  for R-estimation in \cite{JurNonpar92} 
and \cite{JurSenPicek}, with application of the results in \cite{KoulSaleh1995} and \cite{Hallin1999}. 
This is detailed below. The method is based on the asymptotic 
linearity of the autoregression rank score statistics.

\subsection{Asymptotic linearity of autoregression rank scores}

The RR-estimator $\widehat{\boldgreek\beta}_{n\alpha}$ of the regression parameter is a minimizer of the modified Jaeckel measure (\ref{11c})
or a minimizer of the norm of the linear autoregression rank score statistic (\ref{37}).  
The  invariance of the autoregression rank scores guarantees the invariance of the RR-estimator with respect to 
the nuisance autoregression parameters. The consistency of RR-estimator is proven with the aid of  
the uniform asymptotic linearity  of the autoregression rank scores or of the  RRS 
statistic ${\mathbf S}_n({\mathbf b})$, in the regression parameter. This asymptotic linearity in turn implies the asymptotic quadraticity of the 
criterion (\ref{11c}) in the regression parameter.  
The following proposition  on the asymptotic linearity of the linear autoregression rank score statistics
 ${\mathbf S}_n({\mathbf b})$,  is due to Koul and Saleh  \cite{KoulSaleh1995}. Their conditions on $F$ and on the covariates $\mathbf X_n$
can be apparently modified, depending on the  application in the real situation  (e.g. \cite{Hallin1999}). 
%proved  utilizing the ranks of the residuals   under some conditons  
 %of $A_n(\alpha, \mathbf z)$ on $\mathbf d$  
 %has been proven by or of the corresponding RRS  ${\mathbf S}_n({\mathbf b})$, see also
%the analogous results  for linear regression with \textit{i.i.d.} errors in \cite{GJKP},  \cite{Jur/Saleh92}, \cite{JurNonpar92}. 
%with the use of results in \cite{Koul Ossiander1994}. 
 %The weak convergence of $\widehat{\boldgreek\beta}_n(\alpha)$ process 
%in linear model, $\alpha_n^*<\alpha<1-\alpha_n^*$,  %is studied in \cite{Jur_Slopes}. 
%The asymptotic linearity can be apparently proven under various conditions, which should be verified for the specific real life problem to be solved. 
%in the literature;  the stronger results 
% impose the stronger conditions on distribution function $F$ and on the regression matrix.\\

Hence following  \cite{KoulSaleh1995}, we assume the conditions (\ref{moments_of_u}), (\ref{35}), ({35a})
and assume that $F$ satisfies the conditions {\bf(F1)} and {\bf(F2)}, while $\mathbf X_n$ satisfies the conditions {\bf(X1)}--{\bf(X3)}. 
Let  $\hat{\mathbf a}_{n\alpha}(\mathbf y_n-\mathbf  X_n\mathbf b)$ denote the vector of AR rank scores calculated for the residuals 
$\tilde{\mathbf y}_{n,\mathbf b}=\mathbf y_n-\mathbf  X_n\mathbf b$ for  $\mathbf b\in \R^s$, and let $\mathbf S_n$ be the linear 
regression rank scores statistic given in (\ref{37}). Then $\mathbf S_n$ exhibits the following asymptotic linearity in regression parameter:
\begin{eqnarray}\label{2.12}
&&\sup_{\|{\mathbf t}\|\leq C}\Big\{\|\mathbf S_n(\mathbf y_n-\mathbf X_n\boldgreek\beta-n^{-1/2}\mathbf  X_n\mathbf t)
-\mathbf S_n(\mathbf y_n-\mathbf X_n\boldgreek\beta)
+f(F^{-1}(\alpha))\mathbf Q_n\mathbf t\|\Big\}=o_p(1)\nonumber\\
&& \mbox{ as } \; \ny, \; \mbox{ for any fixed } \;  C>0.
%\left[\mathbf X_n-\bar{\mathbf X}_n\right]^{\top}\mathbf X_n\mathbf b\|\right\}=o_p(1)
\end{eqnarray}
This in turn implies that the Jaeckel criterion is  approximated by a quadratic function,  as $\ny$:
\begin{equation}\label{3.4}
\mathcal A_n=\sup_{\|{\mathbf t}\|\leq C}\Big|\mathcal D_n(\boldgreek\beta+n^{-1/2}\mathbf t)
+\mathbf t^{\top}\mathbf S_n(\mathbf y_n-\mathbf X_n\boldgreek\beta)
-\nfrac 12 f(F^{-1}(\alpha))\mathbf t^{\top}\mathbf Q_n\mathbf t\Big|\stackrel{p}{\rightarrow}0 %\; \mbox{ as }  \ny.\nonumber
\end{equation}
 Hence  the estimator $\widetilde{\boldgreek\beta}_{nRR}(\alpha)$ is  approximated by the minimizer of the quadratic approximation 
of $\mathcal D_n$,  as $\ny$;  this in turn gives its consistency and the asymptotic distribution.

\subsection{Consistency and asymptotic distribution\\ of the RR-estimator of $\boldgreek\beta$.}
Let  $\widetilde{\boldgreek\beta}_{nRR}(\alpha)$ be the estimator of $\boldgreek\beta$, defined in (\ref{Jaeckel}) %and (\ref{Jaeckel2}) 
for a fixed  $\alpha\in(0,1)$. 
Then $\widetilde{\boldgreek\beta}_{nRR}(\alpha)$ is invariant to the intercept and to the nuisance autoregression parameters $\phi_1,\ldots,\phi_p$.
Moreover, the criterion $\mathcal D_n(\mathbf b)$ is continuous, convex and piecewise linear function of $\mathbf b\in \R^s$ (see \cite{Jaeckel1972},
 \cite{KoulSaleh1995}, \cite{Jur1992a}). The asymptotic behavior of $\widetilde{\boldgreek\beta}_{nRR}(\alpha)$ is derived with the aid of
the asymptotic linearity of the linear autoregression rank statistics $\mathbf S_n$ in the regression parameter, 
leading to the asymptotic quadraticity of $\mathcal D_n$ [(\ref{2.12}) and (\ref{3.4})]. This leads to the  asymptotic behavior of the RR-estimator, 
described in the following theorem: 
\begin{THE} Under the conditions (\ref{moments_of_u}), (\ref{35}), ({35a}), {\bf(F1)} and {\bf(F2)},  {\bf(X1)}--{\bf(X3)},
 $\widetilde{\boldgreek\beta}_{nRR}(\alpha)$ is a consistent estimator of $\boldgreek\beta$,
invariant with respect to the autoregression of model errors. As $\ny,$ the asymptotic distribution of 
$\left\{n^{1/2}\left(\widetilde{\boldgreek\beta}_{nRR}(\alpha)-\boldgreek\beta\right)\right\}$ is normal
\begin{equation}\label{3.3}  
\mathcal N_s\left(\mathbf 0, \sigma^2\mathbf Q^{-1}\right) 
\mbox{ with  } \;  \sigma^2=\frac{\alpha(1-\alpha)}{f^2(F^{-1}(\alpha))}.
\end{equation}
\end{THE}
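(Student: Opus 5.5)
The plan follows the classical Jaeckel/Jurečková route for R-estimators. The autoregression rank scores replace the ordinary ranks throughout, and the uniform asymptotic linearity (\ref{2.12}) together with the quadratic approximation (\ref{3.4}) are taken as given.

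\emph{Invariance.} Under $\mathbf H_{\mathbf b}$ the residuals $\tilde{\mathbf y}_{n,\mathbf b}$ follow the autoregression (\ref{11b}), whose intercept is $\beta_0+\phi_0$. The dual program (\ref{H2.8}) depends on the data only through $\mathbf y_n^{\top}\mathbf a$ under constraints that include $\mathbf Y_n^{*\top}(\mathbf a-(1-\alpha)\mathbf 1_n)=\mathbf 0$. Adding to $\tilde{\mathbf y}_{n,\mathbf b}$ any linear combination of the columns of $\mathbf Y_n^*$ changes the objective only by a constant over the feasible set. Hence $\widehat{\mathbf a}_{n\alpha}(\mathbf y_n-\mathbf X_n\mathbf b)$ does not depend on $\beta_0,\phi_0,\ldots,\phi_p$. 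Since $\bar a_{n\alpha}=1-\alpha$ is constant, the criterion $\mathcal D_n(\mathbf b)$ and its minimizer inherit this invariance. Consequently we may argue as though the autoregression parameters were fixed at any convenient values.

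\emph{Tightness and consistency.} Put $\mathbf t=n^{1/2}(\mathbf b-\boldgreek\beta)$ and define
\[
\mathcal Q_n(\mathbf t)=-\mathbf t^{\top}\mathbf S_n(\mathbf y_n-\mathbf X_n\boldgreek\beta)+\nfrac12 f(F^{-1}(\alpha))\mathbf t^{\top}\mathbf Q_n\mathbf t .
\]
Its minimizer is
\[
\mathbf t_n^*=f(F^{-1}(\alpha))^{-1}\mathbf Q_n^{-1}\mathbf S_n(\mathbf y_n-\mathbf X_n\boldgreek\beta).
\]
We first show $\mathbf S_n(\mathbf y_n-\mathbf X_n\boldgreek\beta)=O_p(n^{-1/2})$, in the normalization for which (\ref{2.12}) is stated. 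This follows from the asymptotic representation of the autoregression rank scores (Koul and Saleh \cite{KoulSaleh1995}):
\[
n^{-1/2}\sum_t(\mathbf x_t-\bar{\mathbf x}_n)\hat a_{t\alpha}
= n^{-1/2}\sum_t(\mathbf x_t-\bar{\mathbf x}_n)\bigl(I[u_t>F^{-1}(\alpha)]-(1-\alpha)\bigr)+o_p(1).
\]
Using $\mathbf Q_n\to\mathbf Q$ (positive definite), $\mathbf t_n^*$ is therefore $O_p(1)$. We then use the convexity argument of Jurečková \cite{JurNonpar92}; Pollard's convexity lemma would serve equally well. Since $\mathcal D_n(\boldgreek\beta+n^{-1/2}\mathbf t)$ is convex in $\mathbf t$ and differs from $\mathcal Q_n$ by $o_p(1)$ uniformly on compacts by (\ref{3.4}), take a ball of radius $\delta$ around $\mathbf t_n^*$. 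On its boundary $\mathcal Q_n$ exceeds $\mathcal Q_n(\mathbf t_n^*)$ by at least $\nfrac12 f(F^{-1}(\alpha))\lambda_{\min}(\mathbf Q_n)\delta^2$, which is bounded away from zero. Convexity then forces every minimizer $\mathbf t$ of $\mathcal D_n$ into that ball with probability tending to one. Hence
\[
n^{1/2}(\widetilde{\boldgreek\beta}_{nRR}(\alpha)-\boldgreek\beta)=\mathbf t_n^*+o_p(1),
\]
which in particular gives $\sqrt n$-consistency. The identical conclusion also follows from (\ref{2.12}) and (\ref{38}), by showing that the norm $\|\mathbf S_n\|$ is uniformly larger outside the ball.

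\emph{Asymptotic normality.} It remains to find the limit law of the linear term above. The summands $(\mathbf x_t-\bar{\mathbf x}_n)\bigl(I[u_t>F^{-1}(\alpha)]-(1-\alpha)\bigr)$ are independent, centred, with covariance matrix $\alpha(1-\alpha)\mathbf Q_n$. Noether condition (X3), together with (X2), yields the Lindeberg condition, so the Lindeberg--Feller CLT gives the limit $\mathcal N_s(\mathbf 0,\alpha(1-\alpha)\mathbf Q)$. Multiplying by $f(F^{-1}(\alpha))^{-1}\mathbf Q^{-1}$ gives (\ref{3.3}) via Slutsky.

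\emph{Main obstacle.} The delicate step is the representation of $\mathbf S_n$ at the true parameter by the i.i.d.\ innovation indicators. This requires replacing the autoregression quantile $\widehat{\boldgreek\rho}_n(\alpha)$ by its population value plus a $\sqrt n$-small perturbation. The resulting extra term $-n^{-1}\sum(\mathbf x_t-\bar{\mathbf x}_n)f(F^{-1}(\alpha))\mathbf y_{t-1}^{*\top}n^{1/2}(\widehat{\boldgreek\rho}-\boldgreek\rho)$ must vanish. I would handle it using the weighted empirical process bounds of Koul and Saleh and the fact that the centred regressors are asymptotically orthogonal to the lagged responses. I would try to justify that orthogonality from (\ref{35a}) and stationarity, treating it as implicit in the cited linearity result.
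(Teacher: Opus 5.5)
Your proposal is correct and follows the paper's own proof. The common steps are the quadratic approximation (\ref{3.4}), its minimizer $\mathbf V_n(\alpha)=[f(F^{-1}(\alpha))]^{-1}\mathbf Q_n^{-1}\mathbf S_n(\mathbf y_n-\mathbf X_n\boldgreek\beta)$, and the convexity (ball) argument giving $\mathbf T_n(\alpha)-\mathbf V_n(\alpha)=o_p(1)$; you go further by deriving the limit law of $\mathbf V_n$ (Koul--Saleh representation plus Lindeberg--Feller), where the paper simply refers to Hallin and Jure\v{c}kov\'a.

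Two small repairs are needed. First, (\ref{2.12}) and your CLT step only make sense with $\mathbf S_n$ scaled by $n^{-1/2}$, so that $\mathbf S_n(\mathbf y_n-\mathbf X_n\boldgreek\beta)=O_p(1)$ rather than $O_p(n^{-1/2})$, which would make the limit degenerate. Second, the orthogonality you need, $n^{-1}\sum_t(\mathbf x_t-\bar{\mathbf x}_n)\varepsilon_{t-j}=O_p(n^{-1/2})$, follows from nonstochastic (or error-independent) regressors together with \textbf{(X2)} and the absolutely summable autocovariances of the stationary AR errors, not from (\ref{35a}).
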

\textbf{Proof. } Denote 
\begin{equation}\label{3.8}
\mathbf T_n(\alpha)=n^{1/2}\left(\widetilde{\boldgreek\beta}_{nRR}(\alpha)-\boldgreek\beta\right).
\end{equation}
If  $\widetilde{\boldgreek\beta}_{nRR}(\alpha)$ minimizes $\mathcal D_n(\mathbf b),$ then  $\mathbf T_n(\alpha)$
minimizes  the convex function
\begin{equation}\label{3.9}
\mathcal D_n^{\ast}(n^{-1/2}\mathbf t)=\mathcal D_n(\boldgreek\beta+n^{-1/2}\mathbf t)-\mathcal D_n(\boldgreek\beta) 
\end{equation}
with respect to $\mathbf t\in \R^s.$ 
It follows from (\ref{3.4}) that
%\begin{equation}\label
\begin{equation}\label{3.10}
%\min_{\|\mathbf t\|\leq C}
\mathcal D_n^{\ast}(\boldgreek\beta+n^{-1/2}\mathbf t) =
%\min_{\|\mathbf t\|\leq C}
\left\{\nfrac 12 f(F^{-1}(\alpha)) \mathbf t^{\top}\mathbf Q_n\mathbf t
-\mathbf t^{\top}\mathbf S_n(\mathbf y_n-\mathbf X_n\boldgreek\beta) \right\}+o_p(1)
\end{equation}
uniformly for $\|\mathbf t\|\leq C$. The right-hand side of (\ref{3.10}) is minimized by %and the minimizer of the right-hand side of (\ref{3.10}) is
\begin{eqnarray}\label{3.11}
\mathbf V_n(\alpha)&=&\arg\min_{\mathbf t\in\R^s}\left\{\nfrac 12 f(F^{-1}(\alpha)) \mathbf t^{\top}\mathbf Q_n\mathbf t
-\mathbf t^{\top}\mathbf S_n(\mathbf y_n-\mathbf X_n\boldgreek\beta) \right\}\nonumber\\
                   &=& [f(F^{-1}(\alpha))]^{-1}\mathbf Q_n^{-1}\mathbf S_n(\mathbf y_n-\mathbf X_n\boldgreek\beta)=O_p(1)
\end{eqnarray}
and %Then
\begin{eqnarray}\label{3.10a}
&&\min_{\mathbf t\in\R^s}\left\{\nfrac 12 f(F^{-1}(\alpha)) \mathbf t^{\top}\mathbf Q_n\mathbf t
-\mathbf t^{\top}\mathbf S_n(\mathbf y_n-\mathbf X_n\boldgreek\beta) \right\}\nonumber\\
&&=-1/(2f(F^{-1}(\alpha))\mathbf S_n^{\top}
(\mathbf y_n-\mathbf X_n\boldgreek\beta) \mathbf Q_n^{-1}\mathbf S_n(\mathbf y_n-\mathbf X_n\boldgreek\beta).%\nonumber 
\end{eqnarray}

%Then
%\begin{equation}\label{3.14}
%\sup_{\|\mathbf t\|\leq C}\Big |\mathcal D^*_n(\boldgreek\beta+n^{-1/2}\mathbf t)
%+\nfrac 12f(F^{-1}(\alpha))%\{(\mathbf t-\mathbf V_n)^{\top}\mathbf Q_n\{(\mathbf t-\mathbf V_n)-
%\mathbf t^{\top}\mathbf Q_n\mathbf t_n\}\Big| \stackrel{p}{\rightarrow}0 \; \mbox{ as } \ny.\nonumber
%\end{equation} 
%and specifically,  as $\ny$
%\begin{equation}\label{3.16}
%\mathcal D_n(\boldgreek\beta+\mathbf V_n(\alpha))-\mathcal D_n(\boldgreek\beta)=
%-\nfrac 12f(F^{-1}(\alpha))\mathbf V_n^{\top}(\alpha)\mathbf Q_n\mathbf V_n(\alpha)+o_p(1)  
%\end{equation}
Following the steps in \cite{Jur1992a},
or in \cite{Pollard} and \cite{GJKP}, we want to show that $\|\mathbf T_n(\alpha)-\mathbf V_n(\alpha)\|=o_p(1)$. Consider the ball ${\mathcal B}_n$
with center $\mathbf V_n(\alpha)$ and radius $\delta>0$, and  follow the behavior of $\|\mathbf t-\mathbf V_n\|$ for $\mathbf t$ inside 
and outside $\mathcal B_n$ separately. For $\mathbf t\in \mathcal B_n$, 
\begin{equation}\label{3.18} 
\|\mathbf t\|\leq \|\mathbf t-\mathbf V_n\|+\|\mathbf V_n\|\leq \delta+K
\end{equation}
for some $K>0$, with probability exceeding $1-\varepsilon$ for $n\geq n_0,$ hence  (\ref{3.10}) applies.
%\begin{equation}\label{3.23}
%P(\|\mathbf T_n-\mathbf V_n\|\leq \delta)\rightarrow 1\quad \mbox{ for any fixed } \;  \delta>0 \quad \mbox{ as } \; \ny.
%\end{equation}
If $\mathbf t$ lies outside $\mathcal B_n,$ then $\mathbf t=\mathbf V_n+ k\boldgreek\xi$ with $k>\delta$ and $\|\boldgreek\xi\|=1.$ Let $\mathbf t^*$
be the boundary point of $\mathcal B_n$ which lies on on the line connecting  $\mathbf T_n$ and $\mathbf V_n$, thus 
$\mathbf t^*=\mathbf V_n+\delta\boldgreek\xi$. Then, because $\mathcal D^*_n$ is convex and 
$\mathbf t^*=\nfrac{\delta}{k}\mathbf t+(1-\nfrac{\delta}{k})\mathbf V_n$ satisfies (\ref{3.4}), we can conclude
\begin{eqnarray}\label{3.20}
\nfrac{\delta}{k}\mathbf \mathcal D^*_n(\mathbf t)+(1-\nfrac{\delta}{k})\mathcal D^*_n(\mathbf V_n)\geq \mathcal D^*_n(\mathbf t^*)
\geq \nfrac{1}{2}\lambda_0\delta^2+\mathcal D_n^*(\mathbf V_n)-2\mathcal A_n.
\end{eqnarray}
where $\lambda_0$ is the minimal eigenvalue of $\mathbf Q.$ Hence,
\begin{equation}\label{3.21}
\inf_{\|\mathbf t-\mathbf V_n\| \geq \delta}\mathcal D^*(\bf t)\geq \mathcal D_n^*(\bf V_n)
+\nfrac{k}{\delta}\left(\nfrac 12 \lambda_0\delta^2-2\mathcal A_n\right). 
\end{equation}
Hence, given $\delta>0$ and $\varepsilon>0,$ there exist $n_0$ and $\tau>0$, such that for $n\geq n_0$
\begin{equation}\label{3.22}
P\left\{\inf_{\|\bf t-\bf V_n\|\geq \delta}  \mathcal D_n^*(\bf t)>\mathcal D_n^*(\mathbf V_n+\tau)\right\}>1-\varepsilon,
\end{equation}
thus
\begin{equation}\label{3.23a}
P\left(\|\mathbf T_n-\mathbf V_n\|)\leq \delta\right)\rightarrow 1 \; \mbox{ as } \ny \; \mbox{ for any fixed } \; \delta >0. 
\end{equation}
For the asymptotic distribution of $\mathbf V_n$, hence also of $\mathbf T_n$, we refer e.g. to \cite{Hallin1999}. 
\hfill $\Box$

\section{Simulation study under autoregressive errors}
\setcounter{equation}{0}

To further evaluate the performance of the proposed RR-estimator, we conducted a simulation study in a linear regression model with autoregressive errors. We compared its performance to the classical Ordinary Least Squares (OLS) estimator and the median Quantile Regression (QR) estimator. The model considered was:

\begin{equation}
    y_t = \beta_0 + \beta_1 x_{t1} + \beta_2 x_{t2} + \beta_3 x_{t3} + \varepsilon_t,
\end{equation}
where the errors follow an AR(1) process:
\begin{equation}
    \varepsilon_t = \phi \varepsilon_{t-1} + u_t,
\end{equation}
with $\phi = 0.7$ and innovations $u_t$ generated from three different distributions:
\begin{enumerate}
    \item Standard normal: $u_t \sim \mathcal{N}(0, 1)$,
    \item Contaminated normal: $u_t \sim 0.9\, \mathcal{N}(0, 1) + 0.1\, \mathcal{N}(0, 25)$,
    \item Cauchy: $u_t \sim \text{Cauchy}(0, 1)$.
\end{enumerate}

The true parameter values were set as $\beta_0 = 1$, $\beta_1 = 2$, $\beta_2 = -1$, and $\beta_3 = 0.5$. The covariates $x_{tj}$ were generated independently from $\mathcal{N}(0,1)$ for $j = 1,2,3$, and the sample size was fixed at $n = 100$. Each setting was replicated $1000$ times.

For each scenario, we computed the average bias and Root Mean Squared Error (RMSE) of the estimated coefficients across the 1000 replications. Table~\ref{tab:sim_results_detailed} summarizes the results.

\begin{table}[h!]
\centering
\caption{RMSE and Bias for parameter estimates under different error distributions and estimation methods.}
\label{tab:sim_results_detailed}
\begin{tabular}{|l|c|c|c|}
\hline
\textbf{Distribution - Metric (Var)} & \textbf{OLS} & \textbf{QR} & \textbf{RR} \\
\hline
Normal - RMSE (Intercept) & 0.3493 & 0.4296 & 0.4306 \\
Normal - RMSE (x1) & 0.416 & 0.4929 & 0.4934 \\
Normal - RMSE (x2) & 0.3879 & 0.4925 & 0.4896 \\
Normal - Bias (Intercept) & 0.0168 & 0.0092 & 0.0092 \\
Normal - Bias (x1) & -0.026 & 0.0038 & 0.0002 \\
Normal - Bias (x2) & -0.0133 & -0.0279 & -0.025 \\
\hline
Contaminated - RMSE (Intercept) & 1.1769 & 0.6351 & 0.6339 \\
Contaminated - RMSE (x1) & 1.2922 & 0.707 & 0.7052 \\
Contaminated - RMSE (x2) & 1.3328 & 0.6931 & 0.6898 \\
Contaminated - Bias (Intercept) & -0.0094 & 0.0643 & 0.0618 \\
Contaminated - Bias (x1) & -0.0658 & -0.0875 & -0.0866 \\
Contaminated - Bias (x2) & -0.0039 & -0.0328 & -0.0306 \\
\hline
Cauchy - RMSE (Intercept) & 78.6605 & 0.9871 & 0.9841 \\
Cauchy - RMSE (x1) & 67.0591 & 1.1694 & 1.1595 \\
Cauchy - RMSE (x2) & 60.8429 & 1.1067 & 1.1063 \\
Cauchy - Bias (Intercept) & 4.791 & 0.0356 & 0.0347 \\
Cauchy - Bias (x1) & -3.1077 & 0.0403 & 0.0409 \\
Cauchy - Bias (x2) & -0.6391 & 0.0372 & 0.0373 \\
\hline
\end{tabular}
\end{table}

Figures~\ref{fig:rmse_plot} and~\ref{fig:bias_plot} provide visual summaries of the RMSE and bias, respectively, across all three estimators and distributions.

\begin{figure}[h!]
\centering
\includegraphics[width=0.85\textwidth]{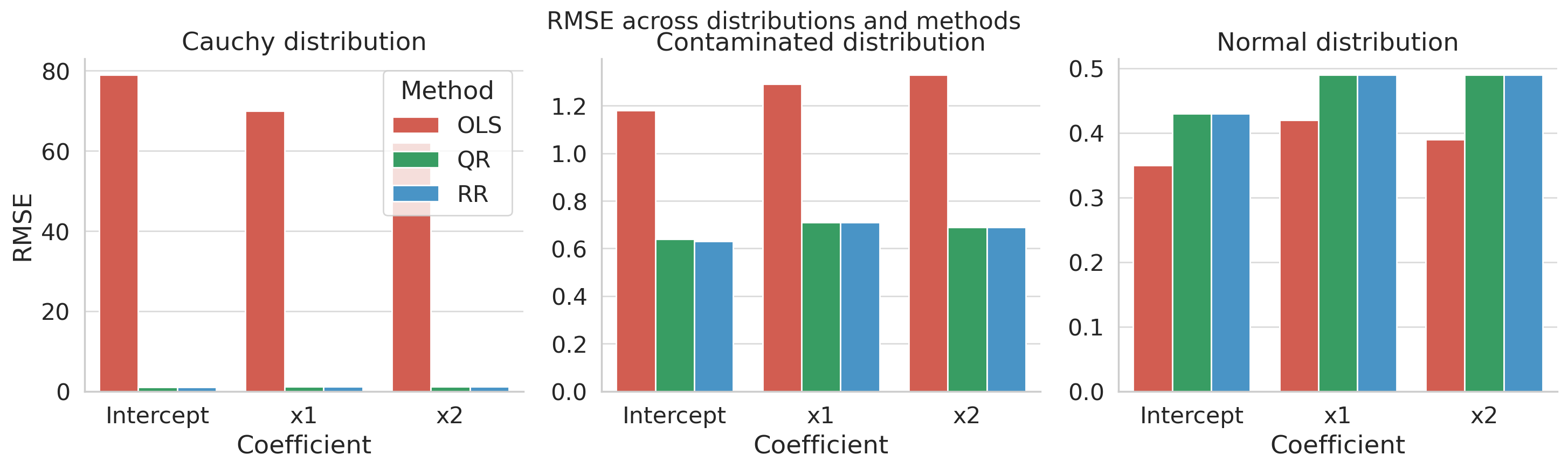}
\caption{Root Mean Squared Error (RMSE) for OLS, QR ($\tau=0.5$), and RR ($\alpha=0.5$) across distributions.}
\label{fig:rmse_plot}
\end{figure}

\begin{figure}[h!]
\centering
\includegraphics[width=0.85\textwidth]{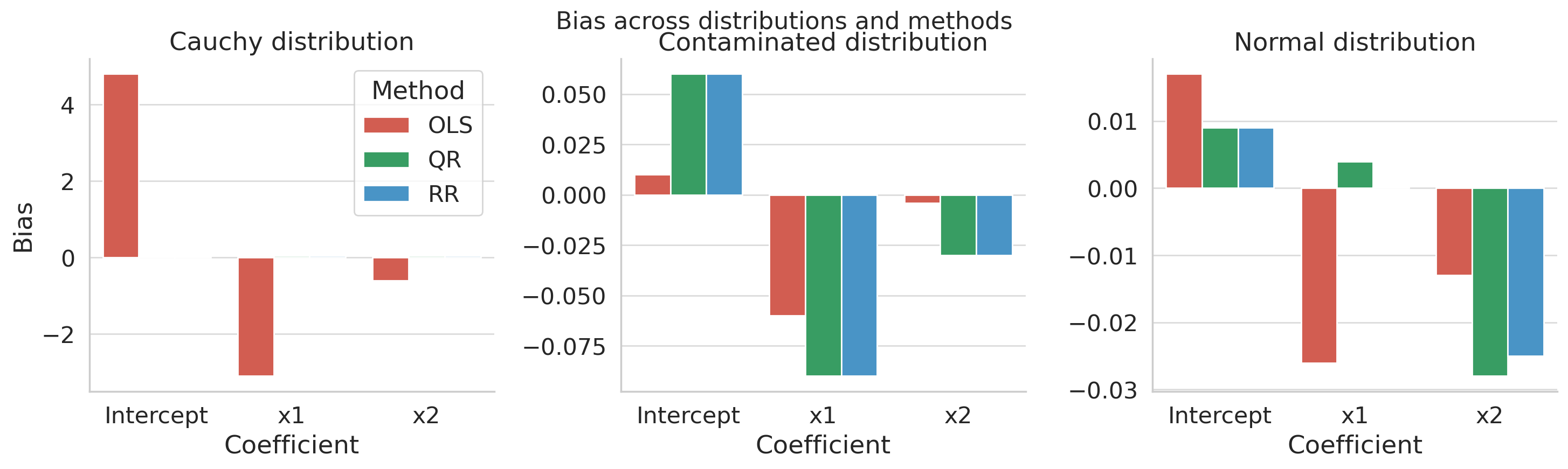}
\caption{Bias for OLS, QR ($\tau=0.5$), and RR ($\alpha=0.5$) across distributions.}
\label{fig:bias_plot}
\end{figure}

These results demonstrate that while OLS performs well under Gaussian errors, it is highly sensitive to contamination and heavy tails. Both QR and RR estimators offer substantially improved robustness in such cases. The RR-estimator often yields slightly lower RMSE than QR, confirming its competitiveness in scenarios with autoregressive and nonstandard error structures.

\section{Numerical illustration using real estate data}
\setcounter{equation}{0}

To complement the theoretical properties of the RR-estimator, we now demonstrate its behavior on a real-world data set. Specifically, we analyze the well-known real estate valuation data, originally collected by the Statistical Office of Taiwan and publicly available as ``Real estate valuation data set.xlsx''~\cite{uci_realestate}. The data consist of 414 observations and include explanatory variables such as the age of the house, the distance to the nearest MRT station, and the number of convenience stores within walking distance. The response variable is the price of a housing unit per square meter (expressed in NT\$10000/m$^2$).

We consider the following linear regression model:
\begin{equation}
\texttt{price}_t = \beta_0 + \beta_1\cdot\texttt{age}_t + \beta_2\cdot\texttt{distance}_t + \beta_3\cdot\texttt{shops}_t + \varepsilon_t,
\end{equation}
where the error term $\varepsilon_t$ may exhibit an autoregressive structure or other forms of deviation from standard assumptions. The parameters were estimated using three different approaches:
\begin{enumerate}
\item Ordinary Least Squares (OLS),
\item Quantile Regression (QR) for $\tau = 0.5$,
\item Rank-based RR-estimator with $\alpha = 0.5$.
\end{enumerate}

The estimation was performed using the R packages \texttt{quantreg} \cite{koenker_quantreg} and \texttt{readxl}, with the RR-estimator computed by exploiting the dual solution of the quantile regression problem with $\tau = -1$.

\begin{table}[h!]
\centering
\caption{Comparison of regression parameter estimates obtained by OLS, median quantile regression (QR), and rank-based RR estimation.}
\label{tab:reg_compare}
\begin{tabular}{lrrr}
\hline
 & OLS & QR ($\tau=0.5$) & RR ($\alpha=0.5$) \\
\hline
Intercept & 42.9773 & 40.5410 & 40.7748 \\
Age       & -0.2529 & -0.2464 & -0.2519 \\
Distance  & -0.0054 & -0.0053 & -0.0053 \\
Shops     &  1.2974 &  1.4381 &  1.4202 \\
\hline
\end{tabular}
\end{table}

Table~\ref{tab:reg_compare} presents the parameter estimates obtained by the three methods. The estimates are relatively stable across all methods, indicating robustness of the linear relationship. However, some differences can be observed.

The intercept shows a noticeable downward shift when switching from OLS to QR and RR, which is expected given that QR and RR are less sensitive to potential skewness or outliers in the response variable. The slopes for all covariates remain close across the methods, with the RR estimates being generally closer to QR than to OLS, reflecting their shared robustness properties. In particular, the estimate for the number of convenience stores is slightly higher for QR and RR compared to OLS, suggesting that the influence of this variable may be more pronounced in the central part of the conditional distribution.

\begin{figure}[h!]
\centering
\includegraphics[width=0.9\textwidth]{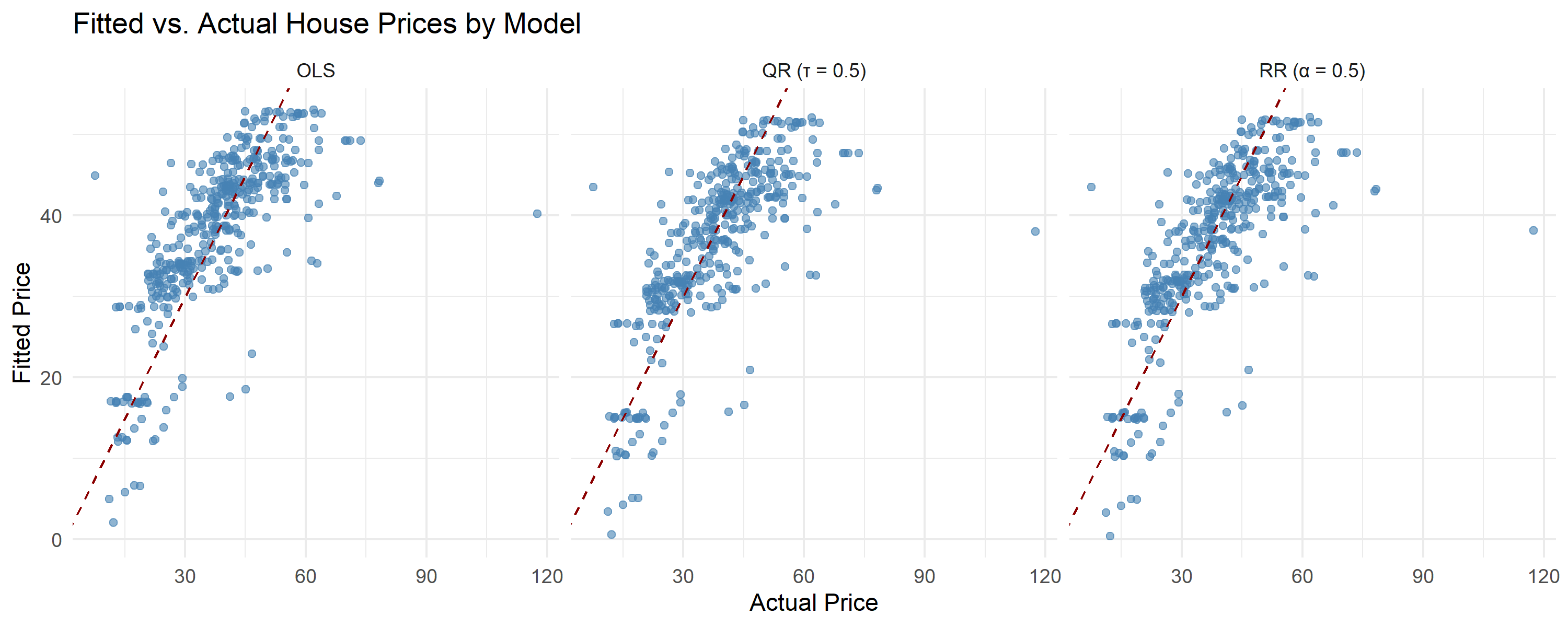}
\caption{Fitted vs. actual house prices shown separately for OLS, QR ($\tau=0.5$), and RR ($\alpha=0.5$). Each subplot compares predicted and actual prices with a reference identity line.}
\label{fig:fitted_vs_actual_panel}
\end{figure}

\begin{figure}[h!]
\centering
\includegraphics[width=0.75\textwidth]{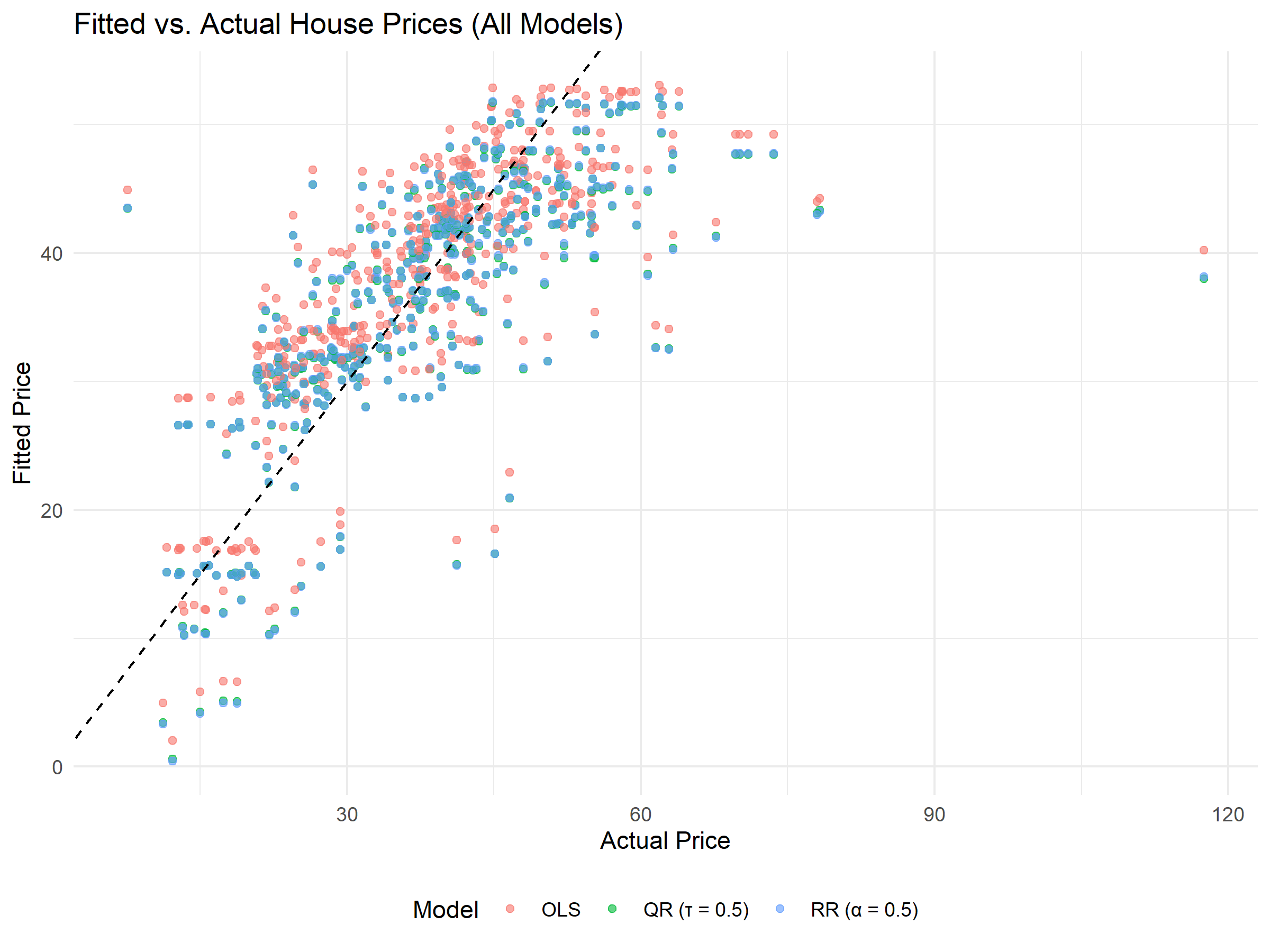}
\caption{Fitted vs. actual house prices shown in a single scatterplot with different colors for OLS, QR ($\tau=0.5$), and RR ($\alpha=0.5$). The dashed line denotes the identity line.}
\label{fig:fitted_vs_actual_overlay}
\end{figure}

Figures~\ref{fig:fitted_vs_actual_panel} and~\ref{fig:fitted_vs_actual_overlay} provide a graphical comparison between the observed and fitted house prices. The faceted version helps assess each method's fit individually, while the overlaid version allows direct comparison. All three models show good alignment with the identity line, indicating reasonable predictive accuracy. However, the QR and RR methods tend to perform slightly better in the presence of extreme values, as expected due to their robustness.

The RR and QR estimates demonstrated improved performance over OLS in terms of both RMSE and MAE (Mean Absolute Error) on the test set, highlighting their advantage in cases with non-normal or heteroskedastic errors.

These findings support the usefulness of the RR-estimator in situations where the error structure might deviate from the classical assumptions, while maintaining reasonable efficiency and interpretability.

\section{Conclusion}
\setcounter{equation}{0}

In this paper, we have introduced a novel rank-based estimator (RR-estimator) for the regression parameters in a linear model with autoregressive errors. Unlike classical methods such as Ordinary Least Squares (OLS), the RR-estimator remains robust under heavy-tailed error distributions, contamination, and nuisance autoregressive dependencies. In contrast to median Quantile Regression (QR), the RR-estimator is specifically adapted to the presence of serial correlation, providing invariance not only to distributional deviations but also to structural error dependence.

The RR-estimator inherits desirable theoretical properties, including asymptotic normality and efficiency under a broad class of error distributions. Practically, it offers greater stability in contaminated or heavy-tailed settings without requiring explicit modeling of the autoregression. Compared to QR, it provides an elegant alternative when serial dependence is suspected but the autoregressive structure is not explicitly specified or reliably estimated.

Our simulation studies and real-data applications demonstrate that the RR-estimator achieves comparable efficiency to QR under normal errors and superior robustness under contamination or heavy-tailed innovations. It consistently outperforms OLS, particularly in non-ideal conditions that frequently arise in applied fields such as hydrology and economics.

\end{document}